\newtheorem{lemma}{Lemma}
\DeclareMathOperator{\sinc}{sinc}
\DeclareMathOperator{\SNR}{SNR}
\newcommand{\bA}{\mathbf{A}}
\newcommand{\bb}{\mathbf{b}}
\newcommand{\bB}{\mathbf{B}}
\newcommand{\be}{\mathbf{e}}
\newcommand{\bF}{\mathbf{F}}
\newcommand{\bI}{\mathbf{I}}
\newcommand{\bo}{\mathbf{o}}
\newcommand{\bw}{\mathbf{w}}
\newcommand{\bx}{\mathbf{x}}
\newcommand{\by}{\mathbf{y}}
\newcommand{\bz}{\mathbf{z}}
\DeclareMathOperator{\re}{\mathbb{R}e}
\begin{document}
\title{Interferometry-based modal analysis with finite aperture effects}

\author{Davood~Mardani, Ayman F. Abouraddy and George K. Atia
\thanks{
D. Mardani and G. Atia are with the Department of Electrical and Computer Engineering, University of Central Florida, e-mails: d.mardani@knights.ucf.edu, george.atia@ucf.edu.

A. Abouraddy is with the College of Optics \& Photonics (CREOL), University of Central Florida, email: raddy@creol.ucf.edu.}
}

\maketitle

\begin{abstract}
We analyze the effects of aperture finiteness on interferograms recorded to unveil the modal content of optical beams in arbitrary basis using generalized interferometry. We develop a scheme for modal reconstruction from interferometric measurements that accounts for the ensuing clipping effects. Clipping-cognizant reconstruction is shown to yield significant performance gains over traditional schemes that overlook such effects that do arise in practice. Our work can inspire further research on reconstruction schemes and algorithms that account for practical hardware limitations in a variety of contexts.  
\end{abstract}

\section{Introduction}
The ability to recover information about a source of light from measured field data underlies many fundamental problems in spectroscopy \cite{spec89,Zhu14OE}, holography \cite{Martinez-Leon:17,Clemente:13}, optical interferometry \cite{Abouraddy12OL,OCT:book}, optical imaging \cite{OCT_science, Milad15OL}, and optical communications \cite{terabite,Bozinovic1545}.

Despite noteworthy efforts to develop theory and algorithms for the inverse source problem,  
much of the existing work have assumed ideal acquisition systems. The effects of physical hardware limitations on performance, however, have been largely unexplored. For example, while much work was devoted to leveraging structural information inherent to light beams (e.g., sparsity~\cite{Denis09OL}, total variation~\cite{TV}, etc.) through the use of regularizers and studying its implications on data acquisition (e.g., recovering signals from a reduced number of measurements~\cite{Duarte_single}), very little is known about the interplay of hardware limitations and signal reconstruction. Moreover, establishing recoverability guarantees under practical hardware constraints of data acquisition systems remains elusive. 

Among such limitations are the limited number of degrees of freedom of actual acquisition systems \cite{Mardani15OE,MardaniOE18}, the finite aperture size of the optical components \cite{sci_rep2017}, and their finite spatial resolution (e.g., Spatial Light Modulators (SLMs), and Digital Micromirror Devices (DMDs))~\cite{Duarte_single,Milad15OL,Duran15OE,Tajahuerce14OE}, to name a few. To underscore the importance of both studying and addressing such limitations, in the sequel we provide examples from the literature concerning their consequential impact on signal reconstruction. 

The number of degrees of freedom of a given data acquisition/sensing system sets a limit on its information capacity. For example, in interferometry-based holography and optical imaging~\cite{OCTprinciple,Martinez-Leon:17,Clemente:13}, the swept delay in the reference arm of a two-path interferometer is the sole degree of freedom at hand. As a result, successful recovery typically necessitates a large sample complexity due to the limited informational content of the highly-correlated measurements.   
This motivated the use of additional hardware components such as introducing optical masks along the path of the optical field in optical imaging and spectroscopy \cite{Duarte_single,Milad15OL,Gong15SR,Tajahuerce14OE,Arce14SPM}, and quantum state tomography \cite{Howland14PRL,Howland16PR}. While the extra degrees of freedom afforded by the randomization pattern these masks map on the field can boost the acquisition system's capability, the design of such masks -- which are non-native to such systems -- is neither cost- nor overhead-free. Moreover, since such masks block a large portion of the light field through sampling, they tend to reduce the effective signal-to-noise ratio (SNR)~\cite{Dumas17OE,Marcos16OE}.

Another important limitation emerges from the finite spatial resolution of the optical components. For instance, the non-vanishing pixel size of the random masks used to either collect measurements \cite{Duarte_single} or illuminate an object \cite{Milad15OL} in imaging applications contributes to the spatial resolution of the formed images. The use of finer pixels to step up resolution comes at the expense of higher-dimensionality \cite{candes_RIP}, thereby trading-off spatial resolution for computational/design cost, as well as potential degradation in signal reconstruction following from the curse of dimensionality phenomenon~\cite{candes_RIP}.

The finite-aperture size of the optical components -- the primary focus of this paper -- introduces non-linearities into the measurement model due to the ensuing clipping in the spatial domain~\cite{sci_rep2017}. In particular, when the light field expands due to spatial diffraction upon propagation, it gets clipped given the finite aperture size of lenses, SLMs, masks, etc. \cite{sci_rep2017,Dumas17OE}, leading to undesired loss of information in the tail of the beam profile beyond the aperture size. In prior experimental work on optical modal analysis, we observed that spatial clipping due to aperture finiteness is one of the most degrading factors for information recovery~(See \cite{sci_rep2017}).       

This paper is primarily focused on analyzing the effects that the finite aperture size and the ensuing beam clipping have on the ability to perform optical modal analysis in generalized interferometry. We also leverage the results of the analysis to devise a class of clipping-cognizant algorithms that provably yield significant gains over schemes oblivious to such effects due to aperture finiteness. 

Generalized interferometry, first introduced in \cite{Abouraddy12OL,sci_rep2017}, greatly extended standard temporal interferometry using general phase operators that act as `delays' in arbitrary degrees of freedom. It was shown that a beam can be analyzed into its constituent (arbitrary) spatial modes by replacing the standard temporal delay in the reference arm of an interferometer with a suitable unitary transformation (termed generalized delay) for which the optical modes are eigenfunctions. Examples of such generalized delays are the Fractional Fourier Transform (frFT) and the Fractional Hankel Transform (frHT) for Hermite Gaussian (HG) and Laguerre Gaussian (LG) modes, respectively. 

The analysis provided in \cite{Abouraddy12OL}, however, assumed ideal implementations of fractional transforms, which is not practically possible. For example, in subsequent work \cite{sci_rep2017} the frFT is implemented using a cascade of three SLMs that map a quadratic phase on the propagated field. 
On account of the practical limitations of such devices, many of the aforementioned imperfections are introduced into the measurement setup, including beam clipping due to finite-aperture size, limited spatial resolution due to the non-vanishing pixel size of the SLMs, and phase granularity due to the finiteness of the quantization levels, the effects of which on the quality of interferometric measurements collected was observed and documented in \cite{sci_rep2017}. In this paper, we take a principled approach to analyzing the effect of clipping due to the finite aperture size of the SLMs on interferometric measurements and leverage the acquired knowledge to compensate for such effects in reconstruction. We first introduce a model for the finite aperture size using clipping Linear Canonical Transforms (LCTs). Then, we analyze the output field of a generalized delay system as a cascade of regular and clipping LCTs. Appendix B provides a full analysis for the output field of different combinations of LCTs. To the best of our knowledge, this is the first work to provide a rigorous analysis of the interplay of finite aperture size on signal reconstruction and to provide clipping-cognizant solutions thereof.    

It is important to note that our work is substantially different from, and should not be confused with, a large body of work on super-resolution techniques, in which one aims to recover \emph{missing information} about an object or light beam due to various practical restrictions (such as the optical diffraction limit \cite{Diflim:book} and the non-zero detector pixel size in optical imaging) by leveraging \emph{prior information about the input signal} \cite{Supres:book}. For example, in super-resolution techniques used for imaging, the non-redundant information of several images and frames are combined to improve the resolution of one image \cite{Supres_LR_HR}. In this paper, we do not seek to recover information missing due to finite-aperture size. Rather, we exploit a derived (through rigorous analysis) clipping-cognizant measurement model to ensure that information relevant to the modal content of a light beam (and intrinsic to the interferometric measurements) is not disregarded in the reconstruction phase as in traditional models that overlook finite-aperture effects.

We also remark that while our focus is on optical modal analysis using interferometry, the analysis and machinery developed herein can be quite useful in other contexts, therefore could inspire further research on reconstruction schemes and algorithms that account for important and practical hardware limitations.

\section{Interferometric modal analysis: Ideal setting}\label{sec:gen_interf}

In temporal interferometry, an input light beam $\psi(t)$, where $t$ is the time variable, is directed to two different optical paths (the arms of the interferometer) via an optical splitter such as a semi-transparent mirror. A delay $\tau$ is swept in one arm of the interferometer, the reference arm, and the interferogram is calculated as the energy of the superposition of the output beam $\psi(t;\tau):=\psi(t-\tau)$ and the incident light beam $\psi(t)$ as,
\begin{equation}\label{eq:interferogram_general}
I(\tau)=\langle|\psi(t;\tau)+\psi(t)|^2\rangle,
\end{equation}   
where $\langle.\rangle$ denotes integral over time.

The interferogram calculated in (\ref{eq:interferogram_general}) can be used to access the spectral content of the input light beam by considering the beam expansion in the orthonormal set of complex exponentials $\{e^{i\omega_n t}\}$ as 
$\psi(t)=\sum_{n=1}^{N}c_n e^{i\omega_n t}$, where $c_n, \,\,n=1,2,\ldots,N$, are the expansion coefficients in the basis $\{e^{i\omega_n t}\}$. Here, we represent the beam as a superposition of $N$ modes based on the fact that only a finite set of spectral components are of interest or even accessible. Besides, the input beam is assumed to have unit energy, i.e., $\sum_{n=1}^N |c_n|^2=1$. The output of the reference arm can then be written as $\psi(t;\tau)=\psi(t-\tau)=\sum_{n=1}^{N}c_n e^{-i\omega_n \tau} e^{i\omega_n t}$. Replacing in (\ref{eq:interferogram_general}), 
\begin{equation}\label{eq:time_linear}
I(\tau)\!=\!2+2\sum_{n=1}^{N}|c_{n}|^{2}\cos(n\tau).
\end{equation}
Therefore, the Fourier transform (FT) of $I(\tau)$ collected by sampling the time delay $\tau$ at Nyquist rate reveals the spectral power content, $|c_n|^2,\,\,n=1,2,\ldots,N$, of the input beam.

Unveiling the spectral content of a light beam from interferometric measurements has been recently extended to optical modal analysis in arbitrary degrees of freedom beyond time (such as position), in what has been termed `generalized interferometry' \cite{Abouraddy12OL,sci_rep2017}. The basis for such generalization is the observation that the delay line forming the reference arm of the temporal interferometer amounts to a linear time-invariant system for which the frequency harmonics $e^{i\omega_n t}$ are eigenfunctions with corresponding eigenvalues $e^{-i\omega_n \tau}, n=1,2,\ldots,N$. Hence, by replacing the standard delay with a `generalized delay' operator for which the input modes are eigenfunctions, an incident light beam can be analyzed into its Hilbert space basis elements \cite{Abouraddy12OL,sci_rep2017}.   
More formally, consider an input beam, $\psi(x)=\sum_{n=1}^{N}c_n\phi_n(x)$, in a Hilbert space spanned by a discrete orthonormal basis $\{\phi_n(x)\}$, with arbitrary degree of freedom $x$ (e.g., spatial, angular, temporal),
where $c_n,\,n=1,2,\ldots,N$, are the modal coefficients. Replacing the time delay in the reference arm with a generalized operator $h(x,x';\alpha):=\sum_{n=1}^{N}e^{-i n\alpha}\phi_{n}(x)\phi_{n}^{*}(x')$ for which $\{\phi_n(x)\}$ are eigenfunctions, the output beam will be, 
\begin{equation}
\psi(x;\alpha)=\sum_{n=1}^{N}c_n\,e^{-in\alpha}\phi_n(x),
\end{equation}
where $\alpha$ is a generalized delay parameter, and $e^{-in\alpha}$ the eigenvalue corresponding to $\phi_n(x)$. Combining the output of the reference arm and the input beam as in (\ref{eq:interferogram_general}), we record an interferogram $I(\alpha)$ (with the exact same form in (\ref{eq:time_linear}) with $\tau$ replaced with $\alpha$) whose FT gives the modal weights $|c_n|^2$. 

It can be shown that the generalized operator $h(x,x';\alpha)$ is generally a fractional transform in arbitrary domain $x$, which can be realized using common optical components \cite{sci_rep2017}. For example, the generalized delay required to analyze a light beam in a Hilbert space spanned by the HG modes $\{\phi_n(x)\}$ is a frFT of order $\alpha$, i.e., $h(x,x';\alpha)\!\propto\!~\exp\left\{\frac{i\pi}{2}(x^2\cot\alpha\!+\!x'^2\cot\alpha\!-\!2xx'\csc\alpha)\right\}$,
which can be implemented using two cylindrical lenses \cite{Abouraddy12OL}. 
Similarly, to analyze a light beam into its radial LG modes, we require a frHT of order $\alpha$, which can be realized using two spherical lenses.  

Alternatively, generalized delay operators corresponding to various fractional transforms can be implemented using programmable SLMs, thereby affording additional flexibility in data acquisition. For example, we have shown that sub-Nyquist sampling of the delay parameter $\alpha$ suffices for reconstructing the modal content of \emph{sparse} beams, yielding substantial savings in sampling and computational complexities \cite{Mardani15OE,MardaniOE18}. Figure \ref{fig:frFT_schematic} depicts an actual implementation of a frFT filter 
using a three-SLM configuration used to analyze an input beam into its HG modes. We refer the reader to \cite{sci_rep2017} for further details. Therefore, we have shown that modal analysis can be carried out in any Hilbert space with basis defined over an arbitrary degree of freedom (beyond temporal) without change to the underlying interferometer structure except for replacing the standard time delay with an appropriate unitary transform 
(the generalized delay operator)
for which the Hilbert space basis elements are eigenfunctions.

\begin{figure}[htb]
	\centering
	\includegraphics[scale=1]{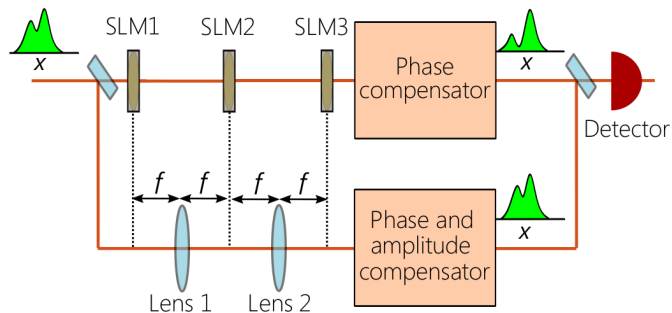}
	\vspace{-.05cm}
	\caption{Schematic of a frFT filter implemented using SLMs that act as quadratic phase operators.}
\label{fig:frFT_schematic}\vspace{-10pt}
\end{figure}

It is useful to rewrite (\ref{eq:interferogram_general}) in the linear system form
\begin{equation}\label{eq:linear_matrix}
\by=\bA\bx,
\end{equation}  
where $\by=[y(\alpha_1) \,y(\alpha_2)\,\ldots\,y(\alpha_M)]^{\mathrm{T}}$ is an $M\!\times\!1$ measurement vector with entries, $y(\alpha_m):=\frac{I(\alpha_m)-2}{2}$, for $M$ chosen settings $\alpha_m,\,m=1,2,...,M$, of the generalized delay parameter, $\bx=[|c_1|^2\,|c_2|^2\,\ldots\,|c_N|^2]^{\mathrm{T}}$ the $N\!\times\!1$ vector of modal weights, and $\bA$ an $M\!\times\!N$ matrix with entries $\cos(n\alpha_m),\,n=1,2,...,N, \,\,m=1,2...,M$, mapping the coefficient vector $\bx$ to an $\mathbb{R}^M$-dimensional measurement space. Our prior work exploited this alternative representation for the interferogram model to achieve compression gains in sample complexity and establish analytical performance guarantees for generalized modal analysis from compressive interferometric measurements sampled at sub-Nyquist rates \cite{Mardani15OE,MardaniarXiv17}.    

\section{Finite-aperture effect}\label{sec:imperfect}
\smallbreak

The previous section focused on optical modal analysis using generalized interferometry in an idealistic setting. In practice, however, the quality of the measurements collected will inevitably depend on the limitations of the hardware used and the underlying physical system constraints -- hence, the actual interferogram will deviate from the idealistic model in (\ref{eq:interferogram_general}) and (\ref{eq:linear_matrix}), which could adversely affect the performance of modal reconstruction. For example, in \cite{sci_rep2017} we have reported on the degradation in the quality of interferograms recorded experimentally originating from \emph{clipping effects} due to the finite-aperture size of the SLMs, the limited \emph{spatial phase resolution} along the transverse direction due to their non-vanishing pixel size, and the \emph{phase granularity} due to the finiteness of the number of phase quantization levels. 

Our experimental investigations have 
further
revealed that the clipping of the beams at the output of the SLMs beyond their aperture size limits has the most consequential effect on the quality of interferograms and, in turn, on modal reconstruction. For illustration, consider the example in Fig. \ref{fig:clip_pixel}, which shows the output interferogram of the generalized interferometer in Fig. \ref{fig:frFT_schematic} for an input beam consisting of the second Hermite Gaussian mode $\mathrm{HG}_2$, but this time taking the finite-aperture and non-vanishing pixel size of the SLMs into consideration. In theory, we expect the interferogram to exhibit a peak at $\alpha=\pi$. However, in both simulations and experiments, we see an apparent drop at $\alpha=\pi$ when the size of the SLMs is $16\mathrm{mm}$ and the pixel size is $10\mu\mathrm{m}$ (Fig. \ref{fig:clip_pixel}(a)). This is because the configuration shown no longer realizes the intended (ideal) frFT for which the HG mode is an eigenfunction. The observed drop is retained even if we use a finer pixel size of $5\mu\mathrm{m}$ as shown in Fig. \ref{fig:clip_pixel}(b). The peak, however, is extant if we increase the SLM size to $60\mathrm{mm}$ as per Fig. \ref{fig:clip_pixel}(c).

Motivated by that, this paper seeks to develop a thorough mathematical analysis of the impact of the finite-aperture size on the interferograms, and, leveraging the results of this analysis, propose a new paradigm for reconstruction that alleviates the ensuing degradation in mode recovery.

\begin{figure}[b]
	\centering
	\includegraphics[scale=1]{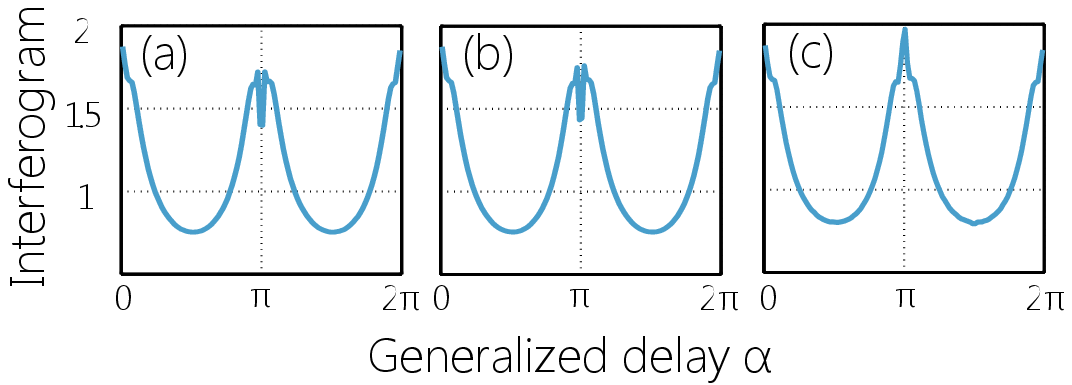}
    \vspace{-.3cm}
	\caption{The effect of spatial aperture and pixel size on the quality of the interferograms. (a) SLM size of $16 \mathrm{mm}$ and pixel size of $10 \mu\mathrm{m}$. (b) SLM size of $16 \mathrm{mm}$ and pixel size of $5 \mu\mathrm{m}$. (c) SLM size of $60 \mathrm{mm}$ and pixel size of $10 \mu\mathrm{m}$ \cite{sci_rep2017}.}
	\label{fig:clip_pixel}
    
\end{figure}

\begin{figure*}[t]
	\centering
	\includegraphics[scale=.8]{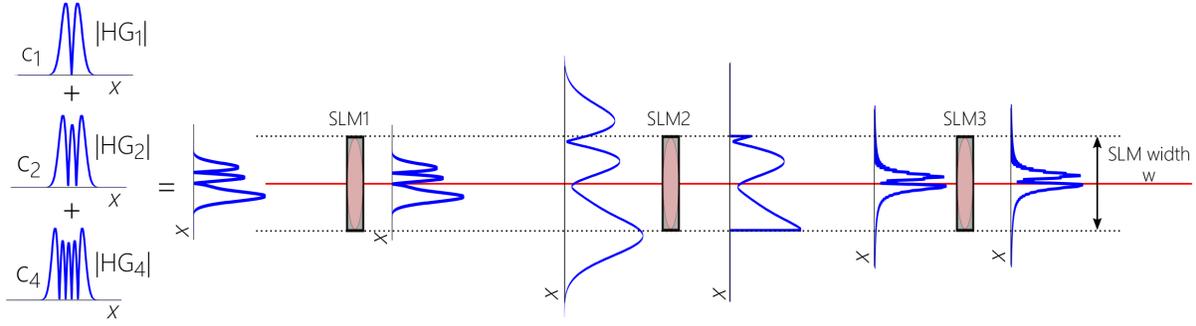}
	\vspace{-.4cm}
	\caption{Progression of a beam obtained as the superposition of $\mathrm{HG}_1, \mathrm{HG}_2, \mathrm{HG}_4$ modes as it propagates, diffracts and gets clipped by the SLMs of the frFT filter. The SLM width is $w=5$mm.}
	\label{fig:clipping SLMs}\vspace{-10pt}
\end{figure*}

\subsection{Clipping-cognizant measurement model}\label{ssec:meas_model}
Many of the components used to implement an optical setup can be modeled as spacial cases of a Linear Canonical Transform (LCT) characterized by four parameters defining the parameter matrix $M=\begin{pmatrix} a&b\\ c&d \end{pmatrix}$ with unit determinant, i.e., $ad-bc=1$, as,
\begin{equation}\label{eq:LCT}
\psi^M(u)=T^M\{\psi(x)\}(u)=\int_{-\infty}^{\infty}\psi(x)h^M(x,u)dx,
\end{equation}
where,
\begin{equation}\label{eq:kernel}
h^M(x,u)=\sqrt{\frac{1}{j2\pi b}}\exp(\frac{j}{2b}(ax^2-2xu+du^2))
\end{equation}
for $b\neq 0$, and
\begin{equation}\label{eq:scale}
\psi^M(u)=T^M\{\psi(x)\}(u)=\sqrt{d}\exp(j\frac{cdu^2}{2})\psi(du)
\end{equation}
for $b=0$ \cite{LCT1971, LCT:book, LCT:eigen}. The notation $T^M\{\psi(.)\}$ denotes the LCT operator with parameter matrix $M$ acting on input $\psi(.)$, and $u$ represents the degree of freedom in the LCT domain. Optical lenses, SLMs, and some of the commonly used linear operators such as Fourier transform, Fresnel integration and fractional Fourier transform are special cases of an LCT with different parameter matrices. For example, Fresnel integration used to approximate the short-range free-space diffraction in an optical setup is an LCT with $M=\begin{pmatrix} 1&\frac{\lambda l}{2\pi}\\ 0&1 \end{pmatrix}$, where $\lambda$ is the field wavelength and $l$ is the free-space length.     

To model an optical component with finite-aperture size $w$, we propose a \emph{clipping LCT}, $T^M_w\{.\}(u)$, whose output for an incident beam $\psi(x)$ is equal to that of an ideal LCT acting on $\psi(x)$ multiplied by a rectangular function of width $w$, i.e., $T^M_w\{\psi(x)\}(u)=T^M\{\psi(x)\Pi(\frac{x}{w})\}(u)$. Leveraging the product property of the LCT, which describes the LCT of a product of two functions (see  Appendix A), $T^M_w\{\psi(x)\}(u)$ gives,   
\begin{equation}\label{eq:clippingb}
\psi^M(u;w)=\frac{w}{2\pi|b|} e^{\frac{j d}{2b}u^2}\left((\psi^M(u) e^{\frac{-j d}{2b}u^2})*\sinc\left(\frac{wu}{2\pi b}\right)\right),
\end{equation}
where $*$ denotes convolution for $b\neq0$ and,
\begin{equation}\label{eq:clipping0}
\psi^M(u;w)=\sqrt{d}\exp\left(i\frac{cd}{2}u^2\right)\psi(du) \Pi\left(\frac{du}{w}\right),
\end{equation}
for $b=0$.

As mentioned earlier, the generalized delay operator is practically implemented using a cascade of optical components. For example, an frFT system analyzing the content of HG beams is realized using three SLMs separated by distances of $2f$ (see Fig. \ref{fig:frFT_schematic}), where $f$ is the focal distance of the lenses in the second arm of the interferometer. By the additivity property of LCTs \cite{LCT:book}, we can show that this system is equivalent to a cascade of five LCTs with parameter matrices $M_i=\begin{pmatrix} 1&0\\ c_i&1 \end{pmatrix},\,\,i=1,2,3$ (corresponding to an SLM with the phase $c_i$), and $M=\begin{pmatrix} 1&\frac{\lambda f}{\pi}\\ 0&1 \end{pmatrix}$ (modeling the Fresnel diffraction in free-space) as seen in the schematic of Fig.\ref{fig:frFT_schematic}. We model the SLMs using clipping LCTs given their finite-aperture size leading to the beam clipping illustrated in Fig.\ref{fig:clipping SLMs}. 

In general, the fractional transform in the reference arm of the interferometer in any degree of freedom can be modeled as a cascade of regular and clipping LCTs. However, it is important to note that the Hilbert space basis elements $\{\phi_n(x)\}$ are no longer eigenfunctions of this transformation owing to the present clipping effect.  
We obtain a closed-form expression for the output of any combination of clipping and regular LCTs (see Lemma 2 to Lemma 5 in Appendix B). Accordingly, the output of the generalized delay for an input basis element $\phi_n(x)$, is $\mathcal{L}\{\phi_n(x)\} = e^{-in\alpha}\hat{\phi}_n(x;\alpha,\bw),\,\,n=1,2,...,N$, where $\bw$ is a model parameter vector whose entries are the aperture sizes of the optical components (e.g., the widths $w_i, \,\,i=1,2,3$, of the three SLMs in the frFT realization). 
As an example, following from Lemma 4 in Appendix B, the response of the frFT system of order $\alpha$ implemented using finite-aperture SLMs to $\mathrm{HG}_n$, the $n^{\text{th}}$ mode $\phi_n(x)$, is
\begin{equation}\label{eq:clipoutput_1}
\begin{split}
&\hat{\phi}_n(x;\alpha,\bw)=\frac{w_1w_2|\csc\alpha|}{(\lambda l)^2}\Big[\Big(\exp\left(-j\frac{\pi\csc\alpha}{\lambda l}x^2\right)\\
&\times\Big(\phi_n(x)
\exp\left(-j\frac{\pi\cot\alpha}{\lambda l}x^2\right)*\sinc\left(\frac{w_1x\csc\alpha}{\lambda l}\right)\Big)\Big)\\
&*\sinc\left(\frac{w_2x}{\lambda l}\right)\Big]\times\Pi\left(\frac{x}{w_3}\right)\exp\left(j\frac{\pi(\csc\alpha +\cot\alpha)}{\lambda l}x^2\right).
\end{split}
\end{equation}

Accordingly, the output of the reference arm is,
\begin{equation}\label{eq:linearity}
\psi(x;\alpha,\bw)=\mathcal{L}\left\{\sum_{n=1}^N c_n\phi_n(x)\right\}=\sum_{n=1}^Nc_ne^{-in\alpha}\hat{\phi}_n(x;\alpha, \mathbf{w}).
\end{equation}

From (\ref{eq:interferogram_general}), the interferogram as function of $\alpha$ is  
\begin{eqnarray}\label{eq:finalmeasuements_imperfection1}
\begin{split}
&I(\alpha;\bw)=<|\psi(x)|^2>+<|\psi(x;\alpha,\bw)|^2>\\
&+<\psi(x)\psi^*(x;\alpha,\bw)>+<\psi(x;\alpha,\bw)\psi^*(x)>,
\end{split}
\end{eqnarray}
where the superscript $^*$ denotes the conjugate operator. The first term on the RHS of (\ref{eq:finalmeasuements_imperfection1}) is the input energy which is unity. The second term is the output energy of the reference arm, hereon denoted by $e_o(\alpha,\bw)$. From (\ref{eq:linearity}), the remaining terms on the RHS of (\ref{eq:finalmeasuements_imperfection1}) can be expanded as,
\begin{equation}\label{eq:remainingtems}
\begin{split}
&2\re\{\sum_{n=1}^N|c_n|^2e^{in\alpha}\int_{-\infty}^{+\infty}\hat{\phi}_n(x;\alpha,\bw)\phi_n^*(x)dx\}\\ 
&+\sum_{n=1}^N\sum_{\substack{{n'=1}\\{n'\neq n}}}^N c_n c_{n'}^*(e^{-in\alpha}\int_{-\infty}^{+\infty}\hat{\phi}_n(x;\alpha,\bw)\phi_{n'}^*(x)\,dx\\
&+e^{in'\alpha}\int_{-\infty}^{+\infty}\phi_n(x)\hat{\phi}^*_{n'}(x;\alpha,\bw)\,dx)\:,
\end{split}
\end{equation}
where $\re\{.\}$ denotes the real part. Defining $g_{nn'}(\alpha;\bw):= \int_{-\infty}^{+\infty}\hat{\phi}_n(x;\alpha,\bw)\phi_{n'}^*(x)\,dx$, the interferogram takes the form,
\begin{eqnarray}\label{eq:finalmeasuements_clipping}
\begin{split}
&I(\alpha;\bw)=1+e_o(\alpha,\bw)\\
&+2\sum_{n=1}^N|c_n|^2|g_{nn}(\alpha;\bw)|\cos(n\alpha+\angle g_{nn}(\alpha;\bw))\\
&+\sum_{n=1}^N\sum^N_{\substack{{n'=1}\\{n'\neq n}}}
c_n c^*_{n'}(e^{-in\alpha}g_{nn'}(\alpha;\bw)+e^{in'\alpha}g^*_{n'n}(\alpha;\bw)).\\
\end{split}
\end{eqnarray}
Defining the interferometric measurements $y(\alpha,\bw)\!:=\!\frac{1}{2}(I(\alpha,\bw)-1-e_o(\alpha,\bw))$, the measurement model can be written in matrix form as
\begin{equation}\label{eq:clip_model_noise_free}
\by=\bar{\bA}\bx+\bB\bar{\bx},
\end{equation}
where $\by\!\triangleq\![y(\alpha_1,\bw), y(\alpha_2,\bw),\ldots,y(\alpha_M,\bw)]^{\mathrm{T}}$, the $M\!\times\!N$ matrix $\bar{\bA}\!\triangleq\![|g_{nn}(\alpha;\bw)|\cos(n\alpha+\angle{g_{nn}(\alpha;\bw))}]$, the $M\!\times\!N(N-1)$ matrix $\mathbf{B}\!\triangleq\!\frac{1}{2}[g_{nm}(\alpha_i;\mathbf{w}_i)+g_{mn}^*(\alpha_i;\mathbf{w}_i)]$, and $\bar{\bx}\!\triangleq\![c_1d^*_2,c_1d^*_3,\ldots,\\c_1d^*_N,c_2d^*_1,c_2d^*_3,\ldots,c_Nd^*_{N-1}]^{\mathrm{T}}$ is an $N(N-1)\!\times\!1$ vector showing the interaction between the different modes. Since $\hat{\phi}_n(x;\alpha,\bw), \,n=1,2,...,N$, can be accurately calculated as in the frFT example of (\ref{eq:clipoutput_1}) from the lemmas derived in Appendix B, the sensing matrix $\bar{\bA}$, and the coefficient matrix $\bB$ in (\ref{eq:clip_model}) are entirely accessible for modal recovery. To account for noise potentially contaminating the measurements, we also incorporate an additive white Gaussian noise term $\bz$ whose entries have variance $\sigma^2$ to obtain the final measurement model
\begin{equation}\label{eq:clip_model}
\by=\bar{\bA}\bx+\bB\bar{\bx}+\bz\:.
\end{equation}
Next, we develop a class of algorithms that are shown to bring about performance gains in modal reconstruction in presence of finite aperture effects by leveraging the clipping-cognizant model derived in (\ref{eq:clip_model}).     

\subsection{Reconstruction methods}\label{ssec:recovery}
The previous analysis has revealed that the effect of aperture finiteness on the interferometric measurements is manifested in the sensing matrix $\bar{\bA}$, the coefficient matrix $\bB$, and the output energy $e_o(\alpha;\bw)$ of the reference arm. Therefore, a reconstruction method that takes advantage of prior information about these terms given the measurement model derived in (\ref{eq:clip_model}) should yield more reliable recovery. 

In an idealistic setting in which the measurement model is given by (\ref{eq:time_linear}), a FT of interferometric measurements acquired by sampling the generalized delay $\alpha$ at Nyquist rate suffices to retrieve the modal energies, i.e., $\hat{\bx}=|\bF\by|$,
where $\bF$ is the discrete Fourier transform matrix, and $\hat{\bx}$ contains the modal energies $|c_n|^2,n=1,2,\ldots,N$ of the input beam. 
Since in many modal analysis problems a large portion of the beam energy is carried by a small set of modes, i.e., the coefficient vector $\bx$ is sparse, we devise sparse recovery algorithms to retrieve the modal content of optical beams in presence of clipping under the linear model in (\ref{eq:clip_model}).  

Our first method ignores the third term on the RHS of (\ref{eq:finalmeasuements_clipping}). In this case, the interferometric measurements are  approximated by
\begin{equation}\label{eq:appmodel}
\by\!\approx\!\bar{\bA}\bx+\bz,
\end{equation}
where 
$\bar{\bA}$ is defined after (\ref{eq:clip_model}). Under this assumption, we can readily use a denoising recovery algorithm such as the Dantzig selector \cite{Dantzig} to recover the modal content, which solves
\begin{equation}\label{eq:Dz}
\begin{split}
&\text{minimize}~  \|\hat{\bx}\|_1\\
& \text{subject to}~ \|\bar{\bA}^{\mathrm{T}}(\bar{\bA}\hat{\bx}-\by)\|_{\infty}\!\leq\!\eta \, \sigma,
\end{split}
\end{equation}
where $\eta$ is a tuning parameter used to control the performance of reconstruction. We remark that although this method ignores terms derived in (\ref{eq:finalmeasuements_clipping}) pertinent to the present clipping, it still partially accounts for clipping captured in the definition of $\bar\bA$ in (\ref{eq:clip_model}) which is different from the ideal $\bA$ in (\ref{eq:linear_matrix}).

\begin{algorithm} [htb]
	\textbf{Input:}\\
$\by$,	$\bA$, $\bB$, $\sigma$ \\
	\textbf{Initialization:}\\
    $\gamma=0$,
    $\sigma' \leftarrow \sigma$\\
	$\hat{\bx}$\, $\leftarrow$ Solving Dantzig Selector with constraint:\\
 $\|\bar{\bA}^{\mathrm{T}}(\bar{\bA}\hat{\bx}-\by)\|_{\infty}\!\leq\!\eta \, \sigma'$ 
\vspace{5pt}\\    
\textbf{While $\ell < L$}\\
$\bar{\bx}$ $\leftarrow$ Estimating $\bar{\bx}$ from $\hat{\bx}$\\
$\gamma(\ell)=\sqrt{\frac{\sum_{i=1}^{N(N-1)} |\bar{x}_i|^2 \|\bb_i\|^2}{M}}$\,\,\,\,\,\,\,// Estimating an upper bound on the standard deviation\\
$\sigma'$ $\leftarrow$ $\sigma'+\gamma$\hspace{30mm}// Updating the constraint\\
$\hat{\bx}$ $\leftarrow$ Updating the estimate of $\bx$ (solution of Dantzig selector)\\
if \,\,\,$|\gamma(\ell)-\gamma(\ell-1)|\leq\zeta$,\hspace{5pt} stop iterations \\
	$\ell = \ell+1$\\
	\textbf{end While}\\
	\textbf{Output:}\\
	$\bx = \hat{\bx}$.
	\caption{Iterative reconstruction algorithm}
	\label {alg:iterative}
\end{algorithm}

\vspace{-.0cm}
\begin{figure}[b]
	\centering
	\includegraphics[scale=1]{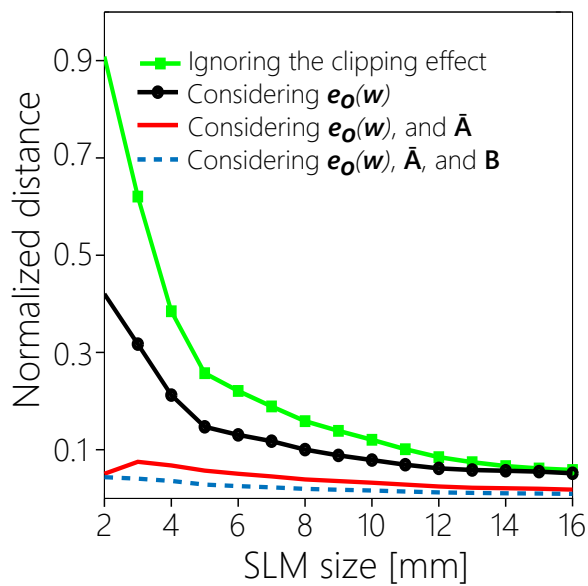}
    \vspace{-.3cm}
	\caption{Measurement model error in presence of clipping effect.}
	\label{fig:distance}
    \vspace{-.0cm}
\end{figure}

\begin{figure}[htb]
	\centering
	\includegraphics[scale=1]{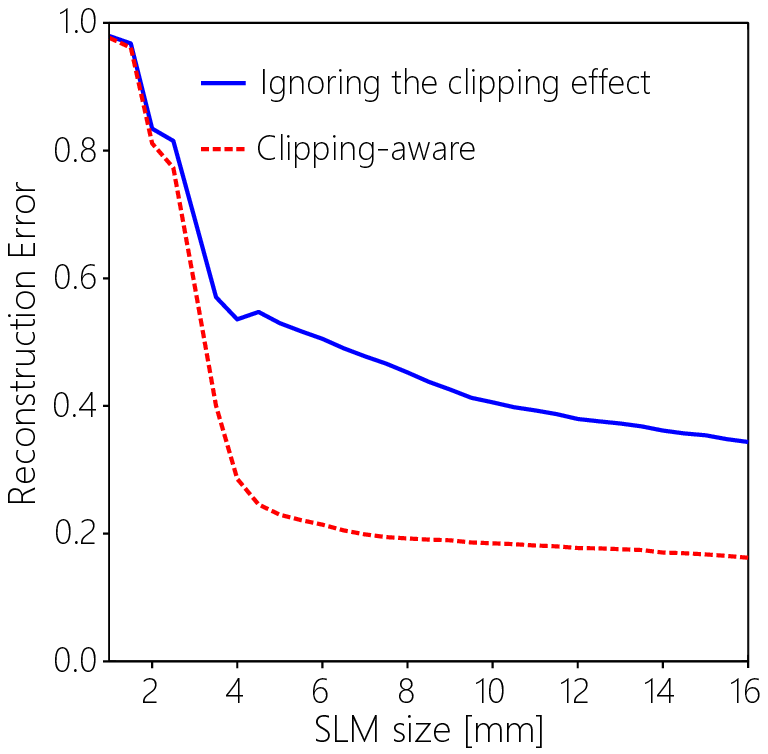}
    \vspace{-.3cm}
	\caption{FT-based modal recovery 
        and considering the clipping effect, reconstruction error versus SLM size.} 
	\label{fig:FT_results}
    \vspace{-.5cm}
\end{figure}

Nevertheless, seeking to further enhance the quality of reconstruction, our second method takes the effect of the term $\bB\bar{\bx}$ into consideration, hereon referred to as interference or noise factor. To this end, one possibility is to estimate $\bar{\bx}$, then subtract $\bB\bar{\bx}$ from the acquired measurements. However, this poses two main challenges. First, the vector $\bx$ is unknown. Second, the relation between $\bx$ and $\bar{\bx}$ is not one-to-one, which makes it impossible to accurately estimate $\bar{\bx}$ and exactly compute the term $\bB\bar{\bx}$ to eliminate it from the measurements even if $\bx$ is known. As such, we propose an iterative reconstruction algorithm detailed in Algorithm \ref{alg:iterative}, which uses the Dantzig selector as a core recovery procedure. 

Algorithm \ref{alg:iterative} is initiated with an estimate of $\bx$ obtained by solving  (\ref{eq:Dz}). Then, an approximate upper bound on the standard deviation of $\bB\bar{\bx}$ is calculated as, $\gamma=\sqrt{\frac{\sum_{i=1}^{N(N-1)} |\bar{x}_i|^2 \|\bb_i\|^2}{M}}$, where $|\bar{x}_i|$ is the $i^{\text{th}}$ element of an approximate $|\bar{\bx}|$, and $\|\bb_i\|$ the $\ell_2$-norm of the $i^{\text{th}}$ column of $\bB$. The magnitudes of the entries of $\bar{\bx}$ are obtained from the approximate vector $\bx$ calculated in the previous iteration. 
This upper bound is used to update the constraint in the Dantzig selector to $\eta\,(\sigma+\gamma)$ to improve reconstruction in the next iteration. The algorithm terminates when the difference between the $\gamma$'s in two consecutive iterations falls below a threshold $\zeta$, or when the number of iterations reaches a predefined maximum value $L$. 

\section{Numerical results}\label{sec:results}
To study the effect of finite-aperture on the interferometry-based modal analysis problem, we consider the example of analyzing a light beam into its HG modes. For data generation, the interferometer is implemented using three SLMs of the same aperture size $w$, and the free-space propagation between the SLMs is modeled using Fresnel integration (see Fig. \ref{fig:frFT_schematic}). The fidelity of this generative model has been confirmed by the agreement of the data with actual experimental measurements in \cite{sci_rep2017}. In this example, the potential number of modes $N = 64$, and the beam energy is carried by $s=4$ modes. To recover the modal energies in the inverse problem, we leverage the derived measurement model (\ref{eq:clip_model}). 
Here, we define $\SNR\triangleq10\log\left(\frac{\mathbb{E}[{I(\alpha;\bw)}^2]}{\sigma^2}\right)$, where $\mathbb{E}[.]$ stands for the expectation w.r.t. the distribution of the generalized delay $\alpha$ (here sampled from a uniform distribution $\mathcal{U}(0,2\pi)$) and $I(\alpha,\bw)$ the interferogram in (\ref{eq:finalmeasuements_clipping}). We evaluate the recovery error as $e\triangleq\frac{\|\mathbf{x}-\hat{\mathbf{x}}\|_2^2}{\|\mathbf{x}\|_2^2}$, where $\hat{\bx}$ is the reconstructed version of the sparse vector $\bx$.

\begin{figure*}[htb]
	\centering
	\includegraphics[scale=.95]{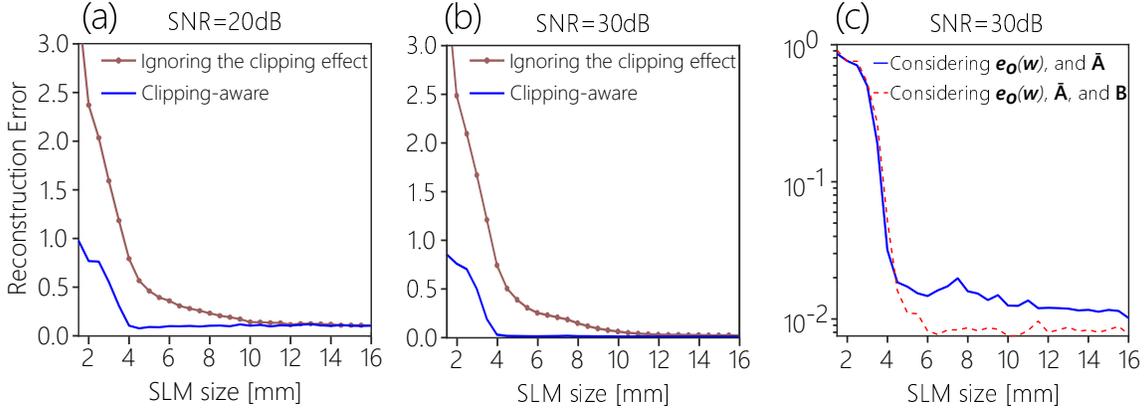}
    \vspace{-.3cm}
	\caption{Comparing reconstruction performance of the CS based approach with considering the clipping terms $\be_\bo(\bw)$, and $\bar{\bA}$ to that of the case in which the clipping effect is ignored. (a) SNR=20dB. (b) SNR=30dB. (c) Comparing the reconstruction error of the iterative algorithm to that of the regular CS based algorithm where the term $\bB\bar{\bx}$ is ignored, SNR=30dB. }
	\label{fig:CS_error_2terms}
    \vspace{-.4cm}
\end{figure*}

\begin{figure}[htb]
	\centering
	\includegraphics[scale=1]{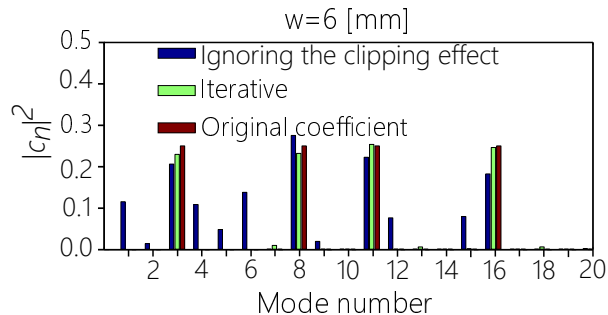}\vspace{-.2cm}
	\caption{Reconstruction of modal coefficients. $w=6$ mm, SNR=30dB.}
	\label{fig:CS_bar}\vspace{-.2cm}
\end{figure}

To validate the derived measurement model and further underscore the importance of accounting for the finite aperture effects, Fig. \ref{fig:distance} displays the normalized distance, $\|\bI-\hat{\bI}\|_2/\|\bI\|_2$, between the true measurements $\bI$ (obtained from the generative forward model) and the predicted measurements $\hat\bI$ for a range of increasingly refined measurement models. 
The green curve with square markers in Fig. \ref{fig:distance} shows the normalized distance for the idealistic model (\ref{eq:linear_matrix}), which completely ignores the clipping effect. The black curve with circle markers is for a model that only captures the output energy term $\be_\bo(\bw) =[e_o(\alpha_1,\bw),e_o(\alpha_2,\bw),\ldots,e_o(\alpha_M,\bw)]^{\mathrm{T}}$ in (\ref{eq:finalmeasuements_clipping}) in accounting for the clipping, but otherwise ignores all other terms. The solid (red) curve further considers the modified sensing matrix $\bar\bA$ as per (\ref{eq:appmodel}), thus yields smaller error. Finally, the dotted blue curve corresponds to the most comprehensive model in (\ref{eq:clip_model}), where all the clipping-related terms are accounted for, i.e., the modified matrix $\bar{\bA}$, the coefficient matrix $\bB$, and the output energy of the frFT system $\be_\bo(\bw)$.

Next, we investigate the reconstruction performance with clipping-cognizant recovery. Fig. \ref{fig:FT_results} shows the reconstruction error of the FT-based reconstruction 
versus the SLM size. Rather than the FT of $\frac{1}{2}\bI(\bw)-\mathbf{1}$, we reconstruct the sparse vector $\hat{\bx}=|\bF(\frac{1}{2}\{\bI(\bw)-\mathbf{1}-\be_\bo(\bw)\})|$, where $\bI(\bw)=[I(\alpha_1,\bw),I(\alpha_2,\bw),\ldots,I(\alpha_M,\bw)]^{\mathrm{T}}$ and $\mathbf{1}$ is a vector of all ones. Therefore, we provide a first level of compensating for the  clipping effect by accounting for the the output energy term $\be_\bo(\bw)$. 
As shown in Fig.\ref{fig:FT_results}, considering the clipping effect (red curve) reduces the reconstruction error and improves the quality of modal recovery. 

Figs. \ref{fig:CS_error_2terms} (a) and \ref{fig:CS_error_2terms} (b) show the reconstruction error of the proposed modal analysis approach versus the SLM size while adopting the CS-based recovery algorithms. As shown, accounting for the output energy of the frFT system and the modified sensing matrix $\bar{\bA}$ as per measurement model (\ref{eq:appmodel}) greatly improves the quality of recovery over the idealistic model in (\ref{eq:linear_matrix}), where the finite-aperture effect is ignored.   

To further improve the quality of reconstruction, we consider the more comprehensive measurement model in (\ref{eq:clip_model}), which also incorporates the derived $\bB\bar{\bx}$ and uses the proposed iterative recovery algorithm described in Algorithm \ref{alg:iterative}. Fig. \ref{fig:CS_error_2terms} (c) shows that Algorithm \ref{alg:iterative} yields further improvement in recovering the modal content of the incident light beam. Fig. \ref{fig:CS_bar} shows a significant improvement in the recovery of the modal energies using Algorithm \ref{alg:iterative} versus a Dantzig selector that ignores the clipping effect.

\section*{Appendix A: Product property of LCTs}
\begin{lemma}
	Let s(x) and g(x) be two signals or light beams. The LCT of their product, $\psi(x)=s(x)\cdot g(x)$, with real parameter matrix $M=\begin{pmatrix} a&b\\ c&d \end{pmatrix}$ is given by,
	\begin{eqnarray}\label{eq:productLCTb}
	\begin{split}
	&T^M\{\psi(x)\}(u)=\psi^M(u)\\
    &=\frac{1}{2\pi|b|}e^{i\frac{d}{2b}u^2}[\big(s^M(u)e^{-i\frac{d}{2b}u^2}\big)*g^{FT}(\frac{u}{2\pi b})]\\
	&=\frac{1}{2\pi|b|}e^{i\frac{d}{2b}u^2}[\big(g^M(u)e^{-i\frac{d}{2b}u^2}\big)*s^{FT}(\frac{u}{2\pi b})],
\end{split}
	\end{eqnarray}
	for $b\neq0$ and,
	\begin{eqnarray}\label{eq:productLCT0}
	T^M\{\psi(x)\}(u)=\sqrt{d}e^{i\frac{cd}{2}u^2}s(du)g(du)
	\end{eqnarray}
	for $b=0$. Here, $s^M(u)$ ($g^M(u)$) and $s^{FT}(u)$ ($g^{FT}(u)$) are the LCT and Fourier transform of $s(x)$ ($g(x)$), respectively.
\end{lemma}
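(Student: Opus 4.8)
The plan is to treat the two cases of the kernel (\ref{eq:kernel})--(\ref{eq:scale}) separately, with essentially all of the work concentrated in the $b\neq 0$ case; the $b=0$ case is immediate. For $b=0$, I would simply substitute the product $\psi(x)=s(x)g(x)$ into the scaling formula (\ref{eq:scale}): since that operator only rescales the argument and multiplies by a chirp, the factorization $\psi(du)=s(du)g(du)$ passes through directly and yields (\ref{eq:productLCT0}) with no further manipulation.

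For $b\neq 0$, I would start from the integral definition (\ref{eq:LCT}) with kernel (\ref{eq:kernel}) applied to $\psi=s\cdot g$, and then replace one of the factors---say $g$---by its inverse Fourier representation $g(x)=\int g^{FT}(\omega)e^{j2\pi\omega x}\,d\omega$ in the ordinary-frequency convention, which is exactly the choice that produces the $g^{FT}(\cdot/2\pi b)$ scaling appearing in the claim. After interchanging the order of integration (Fubini, licensed for $s,g$ sufficiently regular, e.g.\ Schwartz), the linear factor $e^{j2\pi\omega x}$ is folded into the quadratic kernel exponent by completing the shift $u\mapsto u-2\pi b\omega$ in the cross term. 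The key recognition step is that the remaining inner $x$-integral is, up to a chirp in $u$, precisely the LCT $s^M$ of the other factor evaluated at the shifted point $u-2\pi b\omega$; concretely one reads off $\int s(x)\exp(\tfrac{j}{2b}(ax^2-2xv))\,dx=\sqrt{j2\pi b}\,s^M(v)\,e^{-j\frac{d}{2b}v^2}$ with $v=u-2\pi b\omega$, which follows from (\ref{eq:LCT})--(\ref{eq:kernel}) by pulling the $e^{j\frac{d}{2b}v^2}$ factor out of the kernel.

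Substituting this back collapses the two normalization radicals, $\sqrt{1/(j2\pi b)}\cdot\sqrt{j2\pi b}=1$, and the remaining $\omega$-integral, after the change of variables $v=2\pi b\omega$, becomes a convolution in $u$ between $g^{FT}(\cdot/2\pi b)$ and the chirp-demodulated LCT $s^M(\cdot)\,e^{-j\frac{d}{2b}(\cdot)^2}$, with the Jacobian $d\omega=dv/(2\pi b)$ supplying the prefactor. Pulling the global chirp $e^{j\frac{d}{2b}u^2}$ back out then delivers the first line of (\ref{eq:productLCTb}); the second line follows at once from commutativity of convolution, since the whole derivation is symmetric in the roles of $s$ and $g$.

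I expect the main obstacle to be the careful bookkeeping of the quadratic-phase (chirp) factors rather than any conceptual difficulty: one must track how the $du^2$ term of the kernel survives the shift $u\mapsto u-2\pi b\omega$, ensuring that the ``extra'' phase $e^{-j\frac{d}{2b}(u-v)^2}$ attaches to $s^M$ to form the demodulation \emph{inside} the convolution while the clean $e^{j\frac{d}{2b}u^2}$ factors \emph{out}. A secondary subtlety is the emergence of $|b|$ in the prefactor: the substitution $v=2\pi b\omega$ reverses the limits of integration when $b<0$, and reconciling this orientation flip with the sign of the Jacobian is exactly what replaces $b$ by $|b|$ in the normalization. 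For full rigor one would also record integrability hypotheses on $s$ and $g$ to justify both the Fubini interchange and pointwise Fourier inversion, though in the present setting these are understood.
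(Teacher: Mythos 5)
Your proof is correct and arrives at the stated formula with all the delicate pieces in place: the $1/(2\pi|b|)$ normalization, the chirp $e^{-j\frac{d}{2b}(\cdot)^2}$ demodulating $s^M$ \emph{inside} the convolution, and the clean chirp $e^{j\frac{d}{2b}u^2}$ factored outside. Your route is the mirror image of the paper's. The paper expands the factor $s(x)$ as the inverse LCT of $s^M$ (with parameter matrix $\begin{pmatrix} d&-b\\ -c&a \end{pmatrix}$), so that the quadratic $ax^2$ phases of the forward and inverse kernels cancel exactly and the surviving $x$-integral is literally the Fourier transform of $g$ evaluated at $(u-\hat u)/(2\pi b)$; you instead expand $g(x)$ by ordinary Fourier inversion and let the surviving $x$-integral reproduce the LCT $s^M$ at the shifted argument $u-2\pi b\omega$. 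Both are instances of the same ``insert an inverse-transform representation, apply Fubini, recognize the inner integral as a known transform of the other factor'' strategy, and both obtain the second line of (\ref{eq:productLCTb}) by symmetry (the paper by rerunning the argument with $s$ and $g$ swapped, you by commutativity of convolution). What your version buys is that it needs only the forward LCT definition plus Fourier inversion, avoiding the explicit inverse-LCT kernel; the price is exactly the bookkeeping you flag, namely carrying the $du^2$ chirp through the shift $u\mapsto u-2\pi b\omega$, which the paper's cancellation of the $ax^2$ terms sidesteps. Your treatment of the $b=0$ case and of the sign issue producing $|b|$ matches the paper's; no gap.
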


\begin{proof}
	The proof of (\ref{eq:productLCT0}) follows directly from the definition of LCTs with $b=0$. To prove (\ref{eq:productLCTb}), we follow the same procedure used in \cite{FFT_product} to establish the product property of the fractional Fourier Transform. We start by the definition of LCTs as,
	\begin{eqnarray}\label{eq:LCTproductproof1}
	\begin{split}
	&T^M\{s(x)\cdot g(x)\}(u)=
	\int_{-\infty}^{\infty}(s(x)\cdot g(x))\\
    &\times\sqrt{\frac{1}{i2\pi b}}\exp\big(\frac{i}{2b}(ax^2-2xu+du^2)\big)dx.
	\end{split}
	\end{eqnarray}
Replacing $s(x)$ with the Inverse Linear Canonical Transform (ILCT) of its LCT, $s^M(\hat{u})$, with parameter matrix $\begin{pmatrix} d&-b\\ -c&a \end{pmatrix}$, we have,
	\begin{eqnarray}\label{eq:LCTproductproof2}
	\begin{split}
	&T^M\{\psi(x)\}(u)=\psi^M(u)\\
	&=\int_{-\infty}^{\infty} \Big(g(x)\times \sqrt{\frac{1}{i2\pi b}}\exp\big(\frac{i}{2b}(ax^2-2xu+du^2)\big)\\
	&\times\!\int_{-\infty}^{\infty}s^M(\hat{u})\sqrt{\frac{-1}{i2\pi b}}\exp\big(\frac{-i}{2b}(d\hat{u}^2-2x\hat{u}+ax^2)\big)d\hat{u}\Big)dx\\
	&=\frac{\exp(\frac{i d}{2b}u^2)}{2\pi|b|}\int_{-\infty}^{\infty}s^M(\hat{u})\exp(\frac{-i d}{2b}\hat{u}^2)\\
	&\times (\int_{-\infty}^{\infty}g(x)\exp(\frac{-i}{2\pi b}2\pi x(u-\hat{u}))dx)d\hat{u}.
	\end{split}
	\end{eqnarray}
Here, the degree of freedom in the LCT domain of $s(x)$ is denoted $\hat{u}$. As seen, the integral with respect to $x$ is actually the Fourier Transform of $g(x)$, where the variable in the Fourier domain is replaced by $\frac{(u-\hat{u})}{2\pi b}$. Therefore, the LCT is,
	\begin{eqnarray}\label{eq:LCTproductproof3}
	\begin{split}
	&\psi^M(u)=\frac{\exp(\frac{i d}{2b}u^2)}{2\pi|b|}\\
	&\times \int_{-\infty}^{\infty}s^M(\hat{u})\exp(\frac{-i d}{2b}\hat{u}^2)g^{FT}(\frac{u-\hat{u}}{2\pi b})d\hat{u}\\
	&=\frac{\exp(\frac{id}{2b}u^2)}{2\pi|b|}[(s^M(u)\exp(\frac{-i d}{2b}u^2))*g^{FT}(\frac{u}{2\pi b})].
	\end{split}
	\end{eqnarray}
The last equation in (\ref{eq:LCTproductproof3}) provides the LCT of the product of two signals in closed-form. Alternatively, this closed-form expression can be written as,  
	\begin{eqnarray}\label{eq:commutative}
	\psi^M(u)=\frac{e^{(\frac{i d}{2b}u^2)}}{2\pi |b|}[(g^M(u)e^{(\frac{-i d}{2b}u^2)})*s^{FT}(\frac{u}{2\pi b})],
	\end{eqnarray}
by switching the roles of $s(t)$ and $g(t)$ in (\ref{eq:LCTproductproof2}) and replacing $g(t)$ with the ILCT of $g^M(\hat{u})$. This shows the \emph{commutative property} for the LCT of a product.
\end{proof}
This property is used to define the output signal of a clipping LCT, where the clipping effect is modeled as the multiplication of the input beam with a rectangular function.  

\section*{Appendix B: Output of a cascade of clipping LCTs}
To analyze the output beam of different combinations of regular and clipping LCTs, we first establish the \emph{clipping additivity property}. Based on this property, the output beam of a system consisting of a regular LCT with parameter matrix $M_1$ and a clipping LCT of width $w$ with parameter matrix $M_2$ for the input beam $\psi(x)$ is equal to the output beam of a clipping LCT with width $w$ and parameter matrix $M_2M_1$ as,
\begin{equation}\label{eq:additivity rotation}
T^{M_2}\big\{T_w^{M_1}\{\psi(x)\}(u_1)\big\}(u_2)=T_w^{M_2M_1}\{\psi(x)\}(u_2),
\end{equation}
where $M_1=\begin{pmatrix} a_1&b_1\\ c_1&d_1 \end{pmatrix}$, $M_2=\begin{pmatrix} a_2&b_2\\ c_2&d_2 \end{pmatrix}$ and $b_1\neq0$, $b_2\neq0$. This property follows from the definitions of regular and clipping LCTs.

In the proposed basis analysis approach, the generalized phase operator system can be implemented using a cascade of optical components modeled as regular and clipping LCTs. 

Next, we establish several lemmas to capture the clipping effect at the output of systems implemented by clipping LCTs. First, we introduce some additional notation. For $\ell=1,2,\ldots,L$, we define $M_{L\ell}\!\triangleq\!M_LM_{L-1}\ldots M_{\ell}=\begin{pmatrix} a_{L\ell}&b_{L\ell}\\ c_{L\ell}&d_{L\ell}\end{pmatrix},$ and $M_{LL}\!\triangleq\!M_{L}$, where $M_{\ell}=\begin{pmatrix} a_{\ell}&b_{\ell}\\ c_{\ell}&d_{\ell}\end{pmatrix}$. 
We also define the recursive operator equations, 
\begin{equation}\label{recursive}
\begin{split}
&\kappa_{n}\big\{\psi(x);\{M_{L\ell}\}_{\ell},\{w_{\ell}\}_{\ell}\big\}(u) = \exp\left(i \Big(\frac{d_{Ln}}{2b_{Ln}}-\frac{d_{L(n+1)}}{2b_{L(n+1)}}\Big)u^2\right)\\
&\times\Big[\kappa_{n-1}\big\{\psi(x);\{M_{L\ell}\}_{\ell},\{w_{\ell}\}_{\ell}\big\}(u)*\sinc(\frac{w_nu}{2\pi b_{Ln}})\Big], \: n=2,3,\ldots,L 
\end{split}
\end{equation}
where,
\begin{equation}\label{eq:initial}
\begin{split}
&\kappa_1\big\{\psi(x);\{M_{L\ell}\}_{\ell},\{w_{\ell}\}_{\ell}\big\}(u) = \exp\left(i \Big(\frac{d_{L1}}{2b_{L1}}-\frac{d_{L2}}{2b_{L2}}\Big)u^2\right)\\
&\times\!\left(T^{M_{L1}}\{\psi(x)\}(u)
\exp\Big(-i\frac{d_{L1}}{2b_{L1}}u^2\Big)
*\sinc\Big(\frac{w_1u}{2\pi b_{L1}}\Big)\right),
\end{split}
\end{equation}
and $\{ . \}_{\ell}$ is a set indexed by $\ell=1,2,...,L$.

\begin{enumerate}
\item The next lemma calculates the output of a system formed by a cascade of $L$ clipping LCTs with $b_{\ell}\neq0, \,\,{\ell}=1,2,\ldots,L$. 

\begin{lemma}\label{the:successive_b_not0}
Let $\psi(x)$ be the input beam of $L$ clipping LCTs with $M_{\ell}=\begin{pmatrix} a_{\ell}&b_{\ell}\\ c_{\ell}&d_{\ell} \end{pmatrix},\,\,\,{\ell}=1,...,L$ and $b_{\ell}\neq0$. Then, the output of this cascade system is given by,\\
\vspace{-10 pt}
\begin{equation}\label{eq:successive_b_not0}
\begin{split}
\psi_o(u)&=\left(\prod_{\ell=1}^{L}\frac{w_{\ell}}{2\pi|b_{L\ell}|}\right)\\
&\!\times\!\kappa_L\big\{\psi(x);\{M_{L\ell}\}_{\ell},\{w_{\ell}\}_{\ell}\big\}(u),\:\ell=1,2,\ldots,L,
\end{split}
\end{equation}
where $T^{M_{L1}}\{.\}(u)$ is the linear system equivalent to the cascade of $L$ regular LCTs and $\kappa_L$ is defined through the recursion in (\ref{recursive}) and (\ref{eq:initial}).

\end{lemma}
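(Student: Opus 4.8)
The plan is to prove (\ref{eq:successive_b_not0}) by induction on the number $L$ of clipping LCTs, peeling off the \emph{outermost} transform at each step and reducing to a cascade of $L-1$ clips whose composed matrices telescope onto the family $\{M_{L\ell}\}$. Throughout I adopt the natural convention $\frac{d_{L(L+1)}}{2b_{L(L+1)}}=0$, i.e. the empty product $M_{L(L+1)}=I$ carries no residual quadratic phase; this is precisely what makes the recursion (\ref{recursive})--(\ref{eq:initial}) close up at the top level $n=L$. For the base case $L=1$, the single clipping LCT is by definition $T^{M_1}_{w_1}\{\psi(x)\}(u)=T^{M_1}\{\psi(x)\Pi(x/w_1)\}(u)$, whose closed form is the clipping-LCT identity (\ref{eq:clippingb}) obtained by applying the product property (\ref{eq:productLCTb}) with $g(x)=\Pi(x/w_1)$; this is exactly (\ref{eq:successive_b_not0}) with $\kappa_1$ given by (\ref{eq:initial}), once the convention $\frac{d_{12}}{2b_{12}}=0$ drops the top phase.

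For the inductive step I write $\psi_o(u)=T^{M_L}_{w_L}\{\phi\}(u)$, where $\phi$ is the output of the inner cascade of the first $L-1$ clips. The product property (\ref{eq:clippingb}) applied to this last clip gives
\begin{equation*}
\psi_o(u)=\frac{w_L}{2\pi|b_L|}\,e^{i\frac{d_L}{2b_L}u^2}\Big[\big(T^{M_L}\{\phi\}(u)\,e^{-i\frac{d_L}{2b_L}u^2}\big)*\sinc\big(\tfrac{w_L u}{2\pi b_L}\big)\Big],
\end{equation*}
so the task reduces to evaluating the \emph{regular} LCT $T^{M_L}\{\phi\}$. Here the clipping additivity property (\ref{eq:additivity rotation}) is the crucial device: since $\phi=T^{M_{L-1}}_{w_{L-1}}\{\chi\}$ with $\chi$ the output of the first $L-2$ clips, I obtain $T^{M_L}\{\phi\}=T^{M_LM_{L-1}}_{w_{L-1}}\{\chi\}$, so $T^{M_L}\{\phi\}$ is itself the output of an $(L-1)$-clip cascade with matrices $M_1,\ldots,M_{L-2},\,M_LM_{L-1}$. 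The key observation is that the composed matrices of this reduced cascade telescope onto the target family: the top matrix is $M_LM_{L-1}$ and composing downward gives $(M_LM_{L-1})M_{L-2}\cdots M_\ell=M_{L\ell}$ for every $\ell\le L-1$. Hence the induction hypothesis applies and yields the prefactor $\prod_{\ell=1}^{L-1}\frac{w_\ell}{2\pi|b_{L\ell}|}$ together with a $\kappa_{L-1}$ built from precisely the matrices $\{M_{L\ell}\}_{\ell\le L-1}$ and widths $\{w_\ell\}_{\ell\le L-1}$.

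The step I expect to be the main obstacle is reconciling this reduced-cascade $\kappa_{L-1}$ with the $\kappa_{L-1}$ appearing inside the full recursion (\ref{recursive}). The two are built from the same matrices and widths, but differ in their top-level phase convention: in the reduced cascade the top index is $L-1$, so its outermost phase uses $\frac{d_{LL}}{2b_{LL}}\mapsto 0$, whereas inside the full recursion $\kappa_{L-1}$ retains the factor $e^{-i\frac{d_L}{2b_L}u^2}$ (since $\frac{d_{LL}}{2b_{LL}}=\frac{d_L}{2b_L}\neq0$). A short computation of the outermost exponential in (\ref{recursive}) shows the two versions differ by exactly one factor $e^{i\frac{d_L}{2b_L}u^2}$. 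Substituting $T^{M_L}\{\phi\}$ back into the displayed expression for $\psi_o$, this extra phase cancels against the $e^{-i\frac{d_L}{2b_L}u^2}$ carried by the product property, leaving
\begin{equation*}
\psi_o(u)=\Big(\prod_{\ell=1}^{L}\frac{w_\ell}{2\pi|b_{L\ell}|}\Big)\,e^{i\frac{d_L}{2b_L}u^2}\Big[\kappa_{L-1}(u)*\sinc\big(\tfrac{w_L u}{2\pi b_{LL}}\big)\Big],
\end{equation*}
where I used $b_L=b_{LL}$ to fold $\frac{w_L}{2\pi|b_L|}$ into the product. Finally, since $\frac{d_{L(L+1)}}{2b_{L(L+1)}}=0$ by convention, the bracketed expression is exactly $\kappa_L$ as defined by (\ref{recursive}) at $n=L$, completing the induction. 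Two bookkeeping points are worth flagging: one must assume the composed matrices satisfy $b_{L\ell}\neq0$ for all $\ell$ (so that the product property applies at every stage and the phases $d_{L\ell}/(2b_{L\ell})$ are defined), and the base case together with the top-level convention guarantees that the telescoped phases in (\ref{recursive}) match at every level $n$.
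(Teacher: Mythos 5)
Your proof is correct and follows essentially the same route the paper intends: the paper's own proof is the one-line assertion that the lemma ``follows directly from the definition of clipping LCTs and the clipping additivity property,'' and your induction --- peeling off the outermost clip via the product property (\ref{eq:clippingb}) and collapsing the inner cascade with (\ref{eq:additivity rotation}) so the composed matrices telescope onto $\{M_{L\ell}\}$ --- is exactly that argument worked out in full. Your two bookkeeping points, the convention $d_{L(L+1)}/(2b_{L(L+1)})=0$ needed to close the recursion at $n=L$ and the implicit requirement $b_{L\ell}\neq 0$ for all $\ell$ (which the hypothesis $b_\ell\neq 0$ alone does not guarantee, since products of matrices with nonzero $b$ can have vanishing $b$), are genuine gaps in the paper's statement that you correctly identify and handle.
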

\begin{proof}
The proof of Lemma \ref{the:successive_b_not0} follows directly from the definition of clipping LCTs and the clipping additivity property in (\ref{eq:additivity rotation}).
\end{proof}

\item The following lemma calculates the output of a  sequence of $L$ chirp multiplications and scaling systems -- equivalently LCTs with zero $b$ parameter -- with finite-aperture size (e.g., useful in modeling lenses). 
\begin{lemma}\label{the:successive_b0}
Let $\psi(x)$ be the signal or light beam input to a cascade of $L$ clipping LCTs with parameter matrices $M_\ell=\begin{pmatrix} a_{\ell}&0\\ c_{\ell}&d_{\ell} \end{pmatrix}, \,\, {\ell}=1,\ldots,L$. Then, the output signal $\psi_o(u)$ is given by,
\begin{eqnarray}\label{eq:successive_b0}
\begin{split}
\psi_o(u)&=T^{M_LM_{L-1}...M_2M_1}\{\psi(x)\}(u)\\
&\times \Pi\left(\frac{u}{\min\{\frac{w_1}{|d_{L1}|},\frac{w_2}{|d_{L2}|},...,\frac{w_L}{|d_L|}\}}\right).\\
\end{split}
\end{eqnarray}
\end{lemma}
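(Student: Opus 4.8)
The plan is to prove (\ref{eq:successive_b0}) by induction on the number $L$ of cascaded clipping LCTs, unrolling the cascade through the $b=0$ definition (\ref{eq:clipping0}) and separating the output of each stage into a \emph{scale-and-chirp} factor, which will reassemble into the ideal (non-clipping) cascaded LCT $T^{M_{L1}}\{\psi(x)\}(u)$, and a product of \emph{rectangular windows}, which will collapse into a single window of the narrowest effective width. For the base case $L=1$ the statement is immediate from (\ref{eq:clipping0}): the output is $\sqrt{d_1}\,e^{i\frac{c_1 d_1}{2}u^2}\psi(d_1 u)\,\Pi(d_1 u/w_1)$, and writing $\Pi(d_1 u/w_1)=\Pi(u/(w_1/|d_1|))$ with $M_{11}=M_1$ and $d_{11}=d_1$ reproduces (\ref{eq:successive_b0}).

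For the inductive step I would feed the output of the first $L-1$ stages -- assumed by the hypothesis to equal $T^{M_{(L-1)1}}\{\psi(x)\}(u)\,\Pi(u/\min_{1\le\ell\le L-1}\{w_\ell/|d_{(L-1)\ell}|\})$ -- into the $L$-th clipping LCT and apply (\ref{eq:clipping0}), which scales every argument by $d_L$, multiplies by $\Pi(x/w_L)$, and prepends $\sqrt{d_L}\,e^{i\frac{c_L d_L}{2}u^2}$. For the scale-and-chirp part, the accumulated prefactor together with the rescaling $\psi(d_{(L-1)1}\cdot d_L u)=\psi(d_{L1}u)$ reproduces $T^{M_L}\{T^{M_{(L-1)1}}\{\psi\}\}(u)=T^{M_{L1}}\{\psi\}(u)$; this I would justify by the additivity property of ideal LCTs \cite{LCT:book}, equivalently a two-line computation in which the determinant identity $a_\ell d_\ell=1$ confirms that the chirp exponents add correctly.

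The window bookkeeping is the crux of the argument. Rescaling the argument of the inherited window by $d_L$ turns $\Pi(u/(w_\ell/|d_{(L-1)\ell}|))$ into $\Pi(u/(w_\ell/|d_{L\ell}|))$ for each $\ell\le L-1$, using $d_{L\ell}=d_L\,d_{(L-1)\ell}$, while the new $L$-th clip contributes $\Pi(d_L u/w_L)=\Pi(u/(w_L/|d_{LL}|))$ with $d_{LL}=d_L$. The output therefore carries the product $\prod_{\ell=1}^{L}\Pi(u/(w_\ell/|d_{L\ell}|))$ of centered rectangular functions, and since the product of centered rects equals the single rect of smallest width, this collapses to $\Pi(u/\min_{1\le\ell\le L}\{w_\ell/|d_{L\ell}|\})$, precisely the window in (\ref{eq:successive_b0}).

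The main obstacle is exactly this window bookkeeping: one must correctly track how each clip's argument is compressed or stretched by every downstream $b=0$ LCT so that it lands at effective width $w_\ell/|d_{L\ell}|$ at the output plane, and then observe that the product of these nested centered windows reduces to the narrowest one. The scale-and-chirp half is routine once the additivity of ideal LCTs is in hand.
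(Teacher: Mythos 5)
Your proposal is correct and follows essentially the same route as the paper: the paper simply unrolls all $L$ applications of the $b=0$ clipping rule at once, identifies the accumulated scale-and-chirp factor with the regular LCT of parameter matrix $M_LM_{L-1}\cdots M_1$, and collapses the nested rectangular windows to the narrowest effective width $w_\ell/|d_{L\ell}|$, exactly your bookkeeping. Phrasing it as an induction on $L$ rather than a single explicit $L$-fold expansion is a cosmetic difference, not a different argument.
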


\begin{proof}
From the definition of the clipping LCT with $b_{\ell}=0$, for ${\ell}=1,2,\ldots,L$, we have,  
	\begin{equation}\label{eq:proofb0_1}
	\begin{split}
	\psi_o(u)&=T^{M_L}_{w_L}\bigg\{T^{M_{L-1}}_{w_{L-1}}\Big\{\ldots\big\{T^{M_1}_{w_1}\{\psi(x)\}\big\}\Big\}\bigg\}(u)\\
    &=\sqrt{d_Ld_{L-1}\ldots d_3d_2d_1}\exp\big(i \frac{d_Ld_{L-1}\ldots d_3d_2d_1}{2}u^2\\
	&\times(\frac{c_L}{d_{L-1}d_{L-2}\ldots d_3d_2d_1}+\frac{c_{L-1}d_L}{d_{L-2}\ldots d_3d_2d_1}+\ldots\\
    &+\frac{c_2d_Ld_{L-1}\ldots d_3}{d_1}+c_1d_Ld_{L-1}\ldots d_3d_2)\big)\\
	&\times \psi(d_Ld_{L-1} \ldots d_3d_2d_1u)\\
    &\times \Pi\left(\frac{u}{\min\{\frac{w_1}{|d_Ld_{L-1}\ldots  d_3d_2d_1|},\frac{w_2}{|d_Ld_{L-1}\ldots d_3d_2|},\ldots,\frac{w_L}{|d_L|}\}}\right)\:.
	\end{split}
	\end{equation}
Accordingly, the result of Lemma \ref{the:successive_b0} follows by observing that the terms on the RHS of (\ref{eq:proofb0_1}) multiplying the rectangular function are the output of a regular LCT with parameter matrix,
	\begin{equation}\label{eq:proofb0_2}
	\begin{split}
	&M_LM_{L-1}...M_{1}=\\
	&\begin{bmatrix} \frac{1}{d_Ld_{L-1}\ldots d_2d_1}&0\\ \frac{c_L}{d_{L-1}\ldots d_1}+\frac{c_{L-1}d_L}{d_{L-2}\ldots d_1}+\ldots+c_1d_{L}d_{L-1}\ldots d_2&d_L\ldots d_2d_1\end{bmatrix}\\
    &\triangleq\begin{bmatrix} a_{L1}&b_{L1}\\ c_{L1}&d_{L1}\end{bmatrix},
	\end{split}
	\end{equation}
	where $d_{L\ell}=d_Ld_{L-1}\ldots d_{\ell}$ for $\ell\neq L$. 
\end{proof}

\item 
Similar to the previous cases, the following lemma computes the output a sequence of clipping LCTs with $M_{\ell}=\begin{pmatrix} a_{\ell}&b_{\ell}\\ c_{\ell}&d_{\ell} \end{pmatrix},\,\,\,{\ell}=1,\ldots,L$, however, in this case the system is formed by interleaving both types of clipping LCTs, such that the LCTs with odd and even orders have zero and non-zero parameter $b$, respectively. 
We remark that this case is commonly encountered in various applications. For example, a sequence of lenses in an optical setup 
act as chirp multiplications (equivalent to LCTs with zero $b$ parameter), while the free-space propagation between the lenses can be modeled as Fresnel diffractions (LCTs with non-zero $b$ parameter). 

\begin{lemma}\label{the:successive_b0_b_not_0}
Let $\psi(x)$ be the input to a sequence of $L$ clipping LCTs with parameter matrices $M_{\ell}=\begin{pmatrix} a_{\ell}&b_{\ell}\\ c_{\ell}&d_{\ell} \end{pmatrix},\,\,\,{\ell}=1,\ldots,L$, where $b_{2{\ell}'}\neq0$ and $b_{2{\ell}'-1}=0$,$\,\, {\ell}'=1,2,\ldots,\frac{L}{2}$, and $L$ an even integer. Then, the output is given by,
\begin{eqnarray}\label{eq:successive_b0_not0}
\begin{split}
\psi_o(u)&=\left(\prod_{{\ell'}=1}^{\frac{L}{2}}\frac{w'_{{\ell'}}}{2\pi|b'_{\frac{L}{2}{\ell'}}|}\right)\\
&\!\times\!\kappa_{\frac{L}{2}}\big\{\psi(x);\{M'_{\frac{L}{2}\ell'}\}_{\ell'},\{w'_{\ell'}\}_{\ell'}\big\}(u),
\: \ell'=1,2,\ldots,\frac{L}{2}\:,
\end{split}
\end{eqnarray}
where 
$w'_{\ell'}=\min\left\{w_{2{\ell'}},\frac{w_{2{\ell'}-1}}{|d_{2{\ell'}-1}|}\right\}$, $M'_{\frac{L}{2}{\ell'}}\triangleq  M'_{\frac{L}{2}}M'_{\frac{L}{2}-1}\ldots M'_{{\ell'}} =\begin{pmatrix} a'_{\frac{L}{2}{\ell'}}& b'_{\frac{L}{2}{\ell'}}\\ c'_{\frac{L}{2}{\ell'}}& d'_{\frac{L}{2}{\ell'}}  \end{pmatrix}$, and $M'_{\ell'}=\begin{pmatrix} a'_{\ell'} & b'_{\ell'} \\ c'_{\ell'} & d'_{\ell'} \end{pmatrix}\!\triangleq\!M_{2{\ell'}}M_{2{\ell'}-1}$, ${\ell'}=1,2,\ldots,\frac{L}{2}$, and $\kappa_{\frac{L}{2}}$ is defined through the recursion in (\ref{recursive}) and (\ref{eq:initial}).
\end{lemma}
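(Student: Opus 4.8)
The plan is to collapse the interleaved cascade into a shorter cascade that meets the hypotheses of Lemma~\ref{the:successive_b_not0}, and then invoke that lemma verbatim. Because the odd-indexed transforms are pure scaling-with-chirp operators ($b_{2\ell'-1}=0$) while the even-indexed ones are genuine diffraction-type LCTs ($b_{2\ell'}\neq0$), the natural grouping is to \emph{pair} each scaling stage $M_{2\ell'-1}$ with the LCT $M_{2\ell'}$ that immediately follows it. I would show that every such pair, together with its two apertures $w_{2\ell'-1}$ and $w_{2\ell'}$, is equivalent to a \emph{single} clipping LCT whose parameter matrix is $M'_{\ell'}=M_{2\ell'}M_{2\ell'-1}$ and whose effective aperture is $w'_{\ell'}$. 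Doing this for all $\ell'=1,\dots,\frac{L}{2}$ converts the $L$-stage interleaved system into an $\frac{L}{2}$-stage cascade of clipping LCTs, each of nonzero-$b$ type, to which Lemma~\ref{the:successive_b_not0} applies directly and produces the claimed $\kappa_{\frac{L}{2}}$ expression.

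First I would analyze a single pair. Applying the $b=0$ clipping rule~(\ref{eq:clipping0}) to the scaling stage shows that its output equals the regular scaling LCT $T^{M_{2\ell'-1}}\{\psi(x)\}(u_1)$ multiplied by a rectangle; referred to the intermediate variable $u_1$, that rectangle has width $w_{2\ell'-1}/|d_{2\ell'-1}|$, the factor $|d_{2\ell'-1}|$ coming from the argument rescaling in~(\ref{eq:clipping0}). The succeeding stage is a clipping LCT of aperture $w_{2\ell'}$, so it multiplies the beam by a second centered rectangle $\Pi(u_1/w_{2\ell'})$ before transforming. Since the product of two centered rectangles is the rectangle of the smaller width, the two windows merge into a single $\Pi(u_1/w'_{\ell'})$ with $w'_{\ell'}=\min\{w_{2\ell'},\,w_{2\ell'-1}/|d_{2\ell'-1}|\}$, which is exactly the quantity in the statement.

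Next I would transform this windowed beam through $M_{2\ell'}$. Writing it as the product $T^{M_{2\ell'-1}}\{\psi\}(u_1)\cdot\Pi(u_1/w'_{\ell'})$ and applying the product property of Appendix~A (eq.~(\ref{eq:productLCTb})) together with the regular-LCT additivity $T^{M_{2\ell'}}\{T^{M_{2\ell'-1}}\{\psi\}\}=T^{M'_{\ell'}}\{\psi\}$, the pair output takes precisely the convolution-with-$\sinc$ form of a clipping LCT built on the matrix $M'_{\ell'}$. The determinant constraint forces $a_{2\ell'-1}d_{2\ell'-1}=1$, hence $d_{2\ell'-1}\neq0$ and $b'_{\ell'}=b_{2\ell'}d_{2\ell'-1}\neq0$, so each collapsed stage is genuinely of the nonzero-$b$ kind required by Lemma~\ref{the:successive_b_not0}. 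Feeding the reduced matrices $\{M'_{\frac{L}{2}\ell'}\}$ and apertures into that lemma then yields the recursion~(\ref{recursive})--(\ref{eq:initial}) at depth $\frac{L}{2}$.

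The main obstacle I anticipate is the aperture/parameter bookkeeping in this reduction. One has to refer every rectangle consistently to a single variable (the scaling output $u_1$) to obtain the merged width $w'_{\ell'}$, and then verify that the prefactor $w'_{\ell'}/(2\pi|b_{2\ell'}|)$ and the $\sinc$ argument emerging from~(\ref{eq:productLCTb}) coincide with those dictated by the recursion once the combined off-diagonal parameter $b'_{\frac{L}{2}\ell'}$ of $M'_{\frac{L}{2}\ell'}$ is substituted; in particular the rescaling factor $|d_{2\ell'-1}|$ relating the intermediate-referred aperture to the input-referred one must cancel against the ratio $b'_{\ell'}/b_{2\ell'}=d_{2\ell'-1}$ inside $\sinc(\,\cdot\,/b')$. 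Keeping the scaling normalization $\sqrt{d_{2\ell'-1}}$ and the chirp factors $e^{\pm i(d/2b)u^2}$ aligned between the paired and the collapsed representations is the only genuinely delicate computation. Everything else is a mechanical transcription of Lemmas~\ref{the:successive_b_not0} and~\ref{the:successive_b0}, the product property, and the clipping additivity property established in Appendix~B.
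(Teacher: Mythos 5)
Your proposal is correct and follows essentially the same route as the paper: the paper's proof likewise groups the cascade into two-LCT blocks, shows each block $T^{M_{2\ell'}}_{w_{2\ell'}}\circ T^{M_{2\ell'-1}}_{w_{2\ell'-1}}$ collapses to the single clipping LCT $T^{M_{2\ell'}M_{2\ell'-1}}_{\min\{w_{2\ell'},\,w_{2\ell'-1}/|d_{2\ell'-1}|\}}$ by merging the two rectangles referred to the intermediate variable, and then invokes Lemma~\ref{the:successive_b_not0} on the resulting $\frac{L}{2}$-stage nonzero-$b$ cascade. Your explicit check that $b'_{\ell'}=b_{2\ell'}d_{2\ell'-1}\neq0$ is a small detail the paper leaves implicit, but otherwise the arguments coincide.
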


\begin{proof}
	To prove (\ref{eq:successive_b0_not0}), the cascade of clipping LCTs is viewed as a sequence of two-LCT blocks with parameter matrices $M_{2{\ell'}-1}$ and $M_{2{\ell'}}$, wherein $b_{2{\ell'}-1}=0$ and $b_{2{\ell'}}\neq0$. Accordingly, the output of each block for arbitrary input $\psi(x)$ is,
	\begin{equation}\label{eq:proofb01_1}
	\begin{split}
	&T_{w_{2{\ell'}}}^{M_{2{\ell'}}}\big\{T_{w_{2{\ell'}-1}}^{M_{2{\ell'}-1}}\{\psi(x)\}(v)\big\}(u)=
	\int\sqrt{d_{2{\ell'}-1}} \,\,e^{i\frac{c_{2{\ell'}-1}d_{2{\ell'}-1}}{2}v^2}\\
    &\times\psi(d_{2{\ell'}-1}v)\Pi\Big(\frac{d_{2{\ell'}-1}v}{w_{2{\ell'}-1}}\Big)\times h^{M_{2{\ell'}}}(v,u)\Pi\Big(\frac{v}{w_{2{\ell'}}}\Big)\,dv\\
    &=\int T^{M_{2{\ell'}-1}}\{\psi(x)\}(v)h^{M_{2{\ell'}}}(v,u)\Pi\Big(\frac{v}{\min\{w_{2{\ell'}},\frac{w_{2{\ell'}-1}}{|d_{2{\ell'}-1}|}\}}\Big)dv\\
	&=T^{M_{2{\ell'}}M_{2{\ell'}-1}}_{\min\{w_{2{\ell'}},\frac{w_{2{\ell'}-1}}{|d_{2{\ell'}-1}|}\}}\{\psi(x)\}(u),
	\end{split}
	\end{equation}
    where $h^{M_{2{\ell'}}}(v,u)$ is the kernel of a regular LCT with the parameter matrix $M_{2{\ell'}}$. 
Since the overall $b$ parameter of each two-LCT block is non-zero, the result of Lemma \ref{the:successive_b_not0} is invoked to obtain the output of the cascade system.
\end{proof}

\item The final combination is similar to the previous case as a cascade of both types of LCTs but with the difference that the LCTs with the non-zero parameter $b$ are placed in the odd orders.

\begin{lemma}\label{the:successive_b_not_0_b_0}
Consider a similar setup as in the statement of Lemma \ref{the:successive_b0_b_not_0}, but with $b_{2{\ell'}-1}\neq0$ and $b_{2{\ell'}}=0, {\ell'}=1,2,\ldots,\frac{L}{2}$. The output of the cascade system is given by,
\begin{equation}\label{eq:successive_b_not0_b_0}
\begin{split}
\psi_o(u)=\left(\prod_{{\ell'}=1}^{\frac{L}{2}}\frac{w'_{{\ell'}}}{2\pi|b'_{\frac{L}{2}{\ell'}}|}\right)
&\times \kappa_{\frac{L}{2}}\big\{\psi(x);\{M'_{\frac{L}{2}\ell'}\}_{\ell'},\{w'_{\ell'}\}_{\ell'}\big\}(u)\\
&\times\Pi\Big(\frac{d_Lu}{w_L}\Big), \ell'=1,2,\ldots,\frac{L}{2}\:,
\end{split}
\end{equation}
where $w'_1=w_1$, $w'_{\ell'}=\min\{w_{2{\ell'}-1},\frac{w_{2{\ell'}-2}}{|d_{2{\ell'}-2}|}\}$ for ${\ell'}=2,3,\ldots,\frac{L}{2}$ and $\kappa_{\frac{L}{2}}$ is defined through the recursion in (\ref{recursive}) and (\ref{eq:initial}).
\end{lemma}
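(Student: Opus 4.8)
The plan is to mirror the proof of Lemma \ref{the:successive_b0_b_not_0}, but with the roles of the two LCT types within each pair interchanged, and to carefully track the rectangular clipping that the zero-$b$ LCTs now leave at the \emph{output} of each block rather than folding it into the input. First I would partition the cascade of $L$ clipping LCTs into $L/2$ consecutive two-LCT blocks, the $\ell'$-th block being the pair $(M_{2\ell'-1},M_{2\ell'})$ with $b_{2\ell'-1}\neq0$ and $b_{2\ell'}=0$. For a single block acting on an arbitrary input $\phi(x)$, I would expand the inner (non-zero-$b$) clipping LCT via its definition in (\ref{eq:clippingb}) as a regular LCT of the clipped beam $\phi(x)\Pi(x/w_{2\ell'-1})$, and then apply the outer zero-$b$ clipping LCT using (\ref{eq:clipping0}). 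Since the outer map is a pure scaling-and-chirp followed by a rectangular window, the scaling-and-chirp is itself the regular zero-$b$ LCT $T^{M_{2\ell'}}$, so by the additivity of regular LCTs the two regular transforms compose into $T^{M_{2\ell'}M_{2\ell'-1}}$. The block output therefore equals $T_{w_{2\ell'-1}}^{M_{2\ell'}M_{2\ell'-1}}\{\phi(x)\}(u)$ multiplied by the residual window $\Pi(d_{2\ell'}u/w_{2\ell'})$, i.e.\ a non-zero-$b$ clipping LCT with composite matrix $M'_{\ell'}=M_{2\ell'}M_{2\ell'-1}$ that carries an extra clip at its output.

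The key bookkeeping step is then to propagate these residual output windows through the cascade. I would observe that the window $\Pi(d_{2\ell'}u/w_{2\ell'})$ left by block $\ell'$, which has effective width $w_{2\ell'}/|d_{2\ell'}|$, becomes part of the \emph{input} clipping of block $\ell'+1$; multiplying it against that block's native input window $\Pi(x/w_{2\ell'+1})$ collapses the two rectangular functions into a single window of width $\min\{w_{2\ell'+1},\,w_{2\ell'}/|d_{2\ell'}|\}$, which is exactly the claimed $w'_{\ell'+1}$. For the first block there is no incoming window, giving $w'_1=w_1$, and for the last block ($\ell'=L/2$) the residual window $\Pi(d_Lu/w_L)$ has no successor into which it can merge and hence survives as the trailing factor in (\ref{eq:successive_b_not0_b_0}).

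Having reduced the original cascade to an equivalent cascade of $L/2$ \emph{non-zero-$b$} clipping LCTs with parameter matrices $M'_{\ell'}$ and effective input widths $w'_{\ell'}$, followed by the single surviving window $\Pi(d_Lu/w_L)$, I would invoke Lemma \ref{the:successive_b_not0} on this equivalent cascade to obtain the normalization product $\prod_{\ell'=1}^{\frac{L}{2}} w'_{\ell'}/(2\pi|b'_{\frac{L}{2}\ell'}|)$ together with the recursive factor $\kappa_{\frac{L}{2}}$, and finally multiply by $\Pi(d_Lu/w_L)$ to recover (\ref{eq:successive_b_not0_b_0}). The main obstacle I expect is the clip-propagation argument: one must verify that each zero-$b$ block genuinely relocates its window to the output and that the induced scaling $w_{2\ell'}/|d_{2\ell'}|$ combines correctly with the next native width under the minimum, and must confirm that the terminal window cannot be absorbed and therefore remains as a standalone factor rather than being swept into $\kappa_{\frac{L}{2}}$.
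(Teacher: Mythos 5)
Your proposal is correct and follows essentially the same route as the paper's proof: group the cascade into two-LCT blocks, show each block equals a non-zero-$b$ clipping LCT $T^{M_{2\ell'}M_{2\ell'-1}}_{w_{2\ell'-1}}$ times the residual window $\Pi(d_{2\ell'}u/w_{2\ell'})$, merge each residual window into the next block's input clip via the minimum, and invoke Lemma~\ref{the:successive_b_not0} on the resulting $L/2$-stage cascade with the terminal window surviving as a standalone factor. Your explicit clip-propagation bookkeeping is in fact slightly more detailed than the paper's, which leaves that step implicit in its final displayed equation.
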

\end{enumerate}

\begin{proof}
Again, we view the cascade system as a sequence of two-LCT blocks with $b_{2{\ell'}-1}\neq 0$ and $b_{2{\ell'}}=0$ for ${\ell'}=1,2,\ldots,\frac{L}{2}$. Therefore,
\begin{equation}\label{eq:proofb10_1}
	\begin{split}
	&T^{M_{2{\ell'}}}_{w_{2{\ell'}}}\big\{T^{M_{2{\ell'}-1}}_{w_{2{\ell'}-1}}\{\psi(x)\}\big\}(u)=\sqrt{d_{2{\ell'}}}\exp\Big(i\frac{c_{2{\ell'}}d_{2{\ell'}}}{{2}}u^2\Big)\\
    &\times\int \psi(x)h^{M_{2{\ell'}-1}}(x,d_{2{\ell'}}u)\Pi\Big(\frac{x}{w_{2{\ell'}-1}}\Big)\,dx
	\times\Pi\Big(\frac{d_{2{\ell'}}u}{w_{2{\ell'}}}\Big)\\
    &=T_{w_{2{\ell'}-1}}^{M_{2{\ell'}}M_{2{\ell'}-1}}\{\psi(x)\}(u)\Pi\Big(\frac{d_{2{\ell'}}u}{w_{2{\ell'}}}\Big), \,\,\,{\ell'}=1,2,\ldots,\frac{L}{2}.
	\end{split}
	\end{equation}
	This represents the output of a clipping LCT with a non-zero $b$ parameter multiplied by a rectangular function of width $\frac{w_{2{\ell'}}}{|d_{2{\ell'}}|}$. Since the $b$ parameter of $M_{2{\ell'}}M_{2{\ell'}-1}$ is non-zero, we invoke Lemma \ref{the:successive_b_not0} to obtain the output of the entire system as, 
	\begin{equation}\label{eq:proofb10_3}
	\begin{split}
	\psi_o(u)&=T_{\min\{w_{L-1},\frac{w_{L-2}}{|d_{L-2}|}\}}^{M_LM_{L-1}}\bigg\{T_{\min\{w_{L-3},\frac{w_{L-4}}{|d_{L-4}|}\}}^{M_{L-2}M_{L-3}}\Big\{...\big\{T^{M_2M_1}_{w_1}\{\psi(x)\}\big\}\Big\}\bigg\}(u)\\
    &\times\Pi\Big(\frac{d_Lu}{w_L}\Big).
	\end{split}
	\end{equation}
	Similar to the previous case, the expression in (\ref{eq:proofb10_3}) amounts to the output of a sequence of $\frac{L}{2}$ LCTs multiplying the rectangular function $\Pi(\frac{d_Lu}{w_L})$. 
\end{proof}

\section*{Funding}
This work was supported in part by ONR contracts N00014-14-1-0260 and N00014-17-1-2458 and NSF CAREER Award CCF-1552497.

\bibliographystyle{IEEEbib}
\bibliography{references.bib}

\end{document}